\let\IEEEproof\proof
\let\IEEEendproof\endproof
\let\proof\relax
\let\endproof\relax
\let\proof\IEEEproof
\let\endproof\IEEEendproof
\theoremstyle{definition}
\newtheorem{Def}{Definition}
\theoremstyle{definition}
\newtheorem{The}{Theorem}
\theoremstyle{definition}
\newtheorem{Lem}{Lemma}
\theoremstyle{definition}
\theoremstyle{definition}
\newtheorem{Pro}{Proposition}
\theoremstyle{definition}
\theoremstyle{remark}
\newtheorem{Rem}{Remark}
\theoremstyle{remark}
\newtheorem{Ex}{Example}
\theoremstyle{remark}
\theoremstyle{remark}
\newtheorem{Con}{Conjecture}
\theoremstyle{remark}
\theoremstyle{remark}
\DeclareMathAlphabet{\mathscr}{LS1}{stix2scr}{m}{n}
\newcommand{\err}{\ensuremath{ \mathscr{r} }}
\newcommand{\T}{\mathsf{T}}
\renewcommand{\i}{\mathbf{i}}
\renewcommand{\j}{\mathbf{j}}
\renewcommand{\k}{\mathbf{k}}
\renewcommand{\Re}{\operatorname{Re}}
\renewcommand{\Im}{\operatorname{Im}}
\newcommand{\NN}{\mathbb{N}}
\newcommand{\RR}{\mathbb{R}}
\newcommand{\CC}{\mathbb{C}}
\newcommand{\HH}{\mathbb{H}}
\newcommand{\HHnn}{\mathbb{H}^{n\times n}}
\newcommand{\HHmn}{\mathbb{H}^{m\times n}}
\newcommand{\DOI}[1]{\href{https://doi.org/#1}{DOI:\,\nolinkurl{#1}}}
\def\BibTeX{{\rm B\kern-.05em{\sc i\kern-.025em b}\kern-.08em
    T\kern-.1667em\lower.7ex\hbox{E}\kern-.125emX}}
\newcommand{\narrowbmatrix}[1]{%
  {\setlength{\arraycolsep}{2pt}\begin{bmatrix}#1\end{bmatrix}}}
\title{\LARGE \bf
Observer Design over Hypercomplex Quaternions
}
\author{Michael Sebek%
\thanks{*This work was co-funded by the European Union under the project ROBOPROX (reg. no. CZ$.02.01.01/00/22\_008/0004590$).}
\thanks{**Michael Sebek is with the Czech Technical University in Prague, 
Faculty of Electrical Engineering, 
Department of Control Engineering, 
Prague, 16000, Czech Republic
        {\tt\small michael.sebek@fel.cvut.cz}}%
}
\newif\ifarxivstamp
\newcommand{\ArxivStampVenue}{for presentation at the 24th European Control Conference (ECC 2026), Reykjavik, Iceland.}
\newcommand{\ArxivStampDate}{March 23, 2026}
\newcommand{\ArxivStampFunding}{Co-funded by the European Union under the project ROBOPROX (reg. no. CZ.02.01.01/00/22\_008/0004590).}
\newcommand{\ArxivAcceptedStampOverlay}{%
  \ifarxivstamp
  \AddToShipoutPictureFG*{%
    \AtPageUpperLeft{%
      \raisebox{-1.6\baselineskip}[0pt][0pt]{%
        \begin{minipage}{\paperwidth}\centering\footnotesize
        \color{black}
        \hrule \vspace{2pt}
        \textbf{Accepted \ArxivStampVenue, \ArxivStampDate.}\\
        Preprint (accepted version). \ArxivStampFunding
        \vspace{2pt}\hrule
        \end{minipage}%
      }%
    }%
  }%
  \fi
}
\begin{document}

\ArxivAcceptedStampOverlay

\maketitle
\thispagestyle{empty}
\pagestyle{empty}

\begin{abstract}
We develop observer design over hypercomplex quaternions in a characteristic-polynomial-free framework. Using the standard right-module convention, we derive a right observable companion form and companion polynomial that encode error dynamics through right-eigenvalue similarity classes. We also give an Ackermann-type formula for real-coefficient target polynomials, where polynomial evaluation is similarity-equivariant. The resulting recipes place observer poles directly over quaternions and clarify when companion-coordinate updates and one-shot Ackermann formulas remain valid.
\end{abstract}

\section{INTRODUCTION}
\label{Sec:Introduction}
Quaternions extend \(\RR\) and \(\CC\) to the division algebra \(\HH\), but—unlike \(\RR\) and \(\CC\)—multiplication in \(\HH\) is \emph{noncommutative}. This single feature alters standard observer machinery: right and left eigenvalues must be distinguished, and (on the right) eigenvalues are defined only up to \emph{quaternionic similarity classes}. Familiar scalar proxies—determinants, characteristic polynomials, a full Cayley–Hamilton theorem—do not carry over verbatim; even polynomial evaluation depends on whether coefficients act on the left or on the right, so \emph{left and right zeros need not coincide}. As a result, several classical constructions require new formulations when one works natively over \(\HH\); see, e.g., quaternionic stability~\cite{PereiraVettori2006} and linear algebra over \(\HH\)~\cite{Rodman2014}.

In this paper, we develop an observer-focused, determinant–free framework built on \emph{observability} and its duality with controllability. We construct the observable companion (observability-form) realization directly from the observability matrix, link its right spectrum to companion-polynomial zeros, and derive a quaternionic \emph{Observer Ackermann formula} that assigns right–eigenvalue similarity classes using a desired real-coefficient polynomial. Throughout, we remain in the hypercomplex (scalar) model of \(\HH\) where signals, states, and operators live directly in \(\HH\). We work with the right spectrum and use left substitution for the companion polynomial—choices that make noncommutativity explicit and usable.

From the application viewpoint, quaternion-based models appear in several areas, but are often handled after conversion to real \(4\)-vector coordinates, especially in aerospace and robotics, where effective real-vector observers and filters are standard \cite{MarkleyCrassidis2014}. The purpose of the present paper is different: we show that observer design can also be developed \emph{directly over \(\HH\)} in a way that preserves the native hypercomplex structure. This matters precisely in settings where one wants to reason in terms of quaternionic spectra, similarity classes, companion forms, and central polynomial evaluation rather than through coordinate embeddings. 

Such a viewpoint is natural not only in quaternionic system theory itself, but also in areas where the hypercomplex structure is intrinsic or algorithmically meaningful, including quantum dynamics and control~\cite{daSilvaRouchon2008, MaLi25}, NMR/MRI pulse design~\cite{Emsley1992}, polarization/bivariate signal processing and color imaging~\cite{SPMag2023}, quaternion-valued filtering~\cite{TalebiWernerMandic20}, quantum computing~\cite{Meglicki2008}, and emerging quaternionic learning models~\cite{Takahashi2017, Parcollet2018}. Application-oriented quaternionic observer constructions for four-channel systems are discussed in~\cite{SebekEUROCAST2026}. 

We believe the observer results established here over \(\HH\)—including the observable companion construction and the quaternionic Observer Ackermann formula—are original.

\section{LINEAR ALGEBRA OVER QUATERNIONS}
\paragraph*{Quaternions}
A quaternion is a \emph{hypercomplex scalar}  \(q=a+b\i+c\j+d\k\) with \(a,b,c,d\in\RR\) and units \(\i,\j,\k\) satisfying
\(\i^{2}=\j^{2}=\k^{2}=\i\j\k=-1\).
The set \(\HH\) is a four-dimensional associative but \emph{noncommutative} division algebra over \(\RR\).
Write \(\Re(q)=a\), \(\Im(q)=b\i+c\j+d\k\), conjugate \(\bar{q}=a-b\i-c\j-d\k\), and norm \(|q|=\sqrt{q\bar q}\).

Other coordinate parameterizations are common. In the \emph{scalar–vector} form,
a quaternion is the real pair \(q_{\mathrm{sv}}=(v_{\mathrm{s}},v_{\mathrm{v}})\), where the scalar part is \(v_{\mathrm{s}}=a\) and
the vector part is \(v_{\mathrm{v}}=[\,b,\;c,\;d\,]^{\T}\in\RR^3\).
Equivalently, one stacks these into the vector–scalar \(4\)-vector
$q_{\mathrm{sv}}=\begin{bmatrix} v_{\mathrm{v}}^{\T} & v_{\mathrm{s}} \end{bmatrix}^{\T}\in\RR^4.$
These representations are standard in aerospace and robotics.

In this paper, we work natively
over \(\HH\) to preserve the noncommutative structure needed for right/left spectra and
observable–companion constructions.
For details, see~\cite{Rodman2014}.

\paragraph*{Similarity classes}
Quaternions \(q,r\) are \emph{similar}, denoted \(q\sim r\), if \(q=\alpha^{-1}r\alpha\) for some nonzero \(\alpha\in\HH\).
Similarity preserves \(\Re(\cdot)\), \(|\cdot|\), and \(|\Im(\cdot)|\) and partitions \(\HH\) into equivalence classes \([\cdot]\).
Real quaternions are isolated: \([a]=\{a\}\).
If \(\Im(q)\neq0\), then \([q]=\{\Re(q)+u|\Im(q)|: u\in\mathrm{span}\{\i,\j,\k\},|u|=1\}\), a 2-sphere meeting the standard complex line \(\{a+b\i: a,b\in\RR\}\) in the two points \(\Re(q)\pm \i|\Im(q)|\).
For instance, \([\k]\) contains all imaginary units, including \(\pm \i\).

\paragraph*{Matrices over \(\HH\)}
For \(m,n\in\NN\), let \(\HHmn\) denote the set of \(m\times n\) matrices with entries in $\HH.$
We write $A^{*}=\bar{A}^{\T}$ (entry-wise conjugation followed by transpose).
Noncommutativity affects basic identities: in general 
\( \overline{AB}\neq \bar{A}\bar{B}\) and \((AB)^{\T}\neq B^{\T}A^{\T}\), while the mixed rule \((AB)^{*}=B^{*}A^{*}\) does hold.
See~\cite{Rodman2014} for further details.

\emph{Caveat for design.} Classical $\RR/\CC$ tools—determinant, characteristic polynomial, and the Cayley–Hamilton theorem—do not carry over verbatim to $\HH$. The determinant admits nonstandard variants with different meanings; a “characteristic polynomial” in the usual sense may be absent or enjoy only limited properties; and Cayley–Hamilton can fail or requires alternative formulations. Hence these tools cannot be used naively for observer design over $\HH$ and are replaced here by determinant-free, characteristic-polynomial-free constructions.

\paragraph*{Rank and invertibility}
For \(A\in\HHmn\), the quaternionic rank \(\operatorname{rank}_{\HH}A\) is the maximal number of \emph{right–independent} columns (i.e., if \(\sum_j a_j c_j=0\) with \(c_j\in\HH\) on the right, then all \(c_j=0\)).
A square \(A\in\HH^{n\times n}\) is invertible iff \(\operatorname{rank}_{\HH}A=n\)~\cite{Rodman2014}.

\paragraph*{Right eigenvalues and the spectrum}
For \(A\in\HH^{n\times n}\), a \emph{right eigenpair} \((v,\lambda)\) satisfies
\begin{IEEEeqnarray}{rCl}
A v &=& v\lambda,\qquad v\in\HH^{n}\setminus\{0\},\ \lambda\in\HH.
\end{IEEEeqnarray}
Right scaling of \(v\) by \(\alpha\neq0\) replaces \(\lambda\) by \(\alpha^{-1}\lambda\alpha\), so right eigenvalues are defined only up to similarity.
The right spectrum is thus the set of classes
$\sigma_{\err}(A):=\{[\lambda]:\exists v\neq0 \text{ with } Av=v\lambda\}\subset \HH/\!\sim,$
which is invariant under similarity \(A\mapsto S^{-1}AS\)~\cite{Rodman2014}.
When needed, take the unique complex representative with nonnegative imaginary part—the \emph{standard eigenvalue}.

\paragraph*{Polynomials and evaluation}
Let \(\HH[\lambda]\) be the ring of quaternionic polynomials \(p(\lambda)=\sum_{k=0}^{n}p_k\lambda^{k}\) with \(p_k\in\HH\) and indeterminate \(\lambda\) commuting with all coefficients.
Evaluation at \(q\in\HH\) depends on side:
the \emph{right value} is \( p(q)_{\err}=\sum_{k=0}^{n}p_k q^{k} \),
while the \emph{left value} is \( p(q)_{\ell}=\sum_{k=0}^{n} q^{k} p_k \).

\paragraph*{Left/right zeros}
A \emph{right zero} of \(p\) is \(q\) with \(p(q)_{\err}=0\); a \emph{left zero} satisfies \(p(q)_{\ell}=0\).
Left and right zeros correspond up to similarity:
if \(p(q)_{\ell}=0\), then for some \(\alpha\neq0\), \(p(\alpha^{-1}q\alpha)_{\err}=0\), and conversely~\cite{GordonMotzkin65}.
If all \(p_k\in\RR\) (central), then \(p(q)_{\err}=p(q)_{\ell}\) and the two zero sets coincide.

\paragraph*{Isolated vs.\ spherical zeros}
A degree-\(n\) quaternionic polynomial has either \(n\) (counting multiplicity) right zeros or infinitely many.
A nonreal zero is either \emph{isolated} (a single point in its class) or \emph{spherical} (the full class \([q]\), a 2-sphere).
If the coefficients are real, every zero is either real (hence isolated) or spherical~\cite{PogoruiShapiro2004}.

\paragraph*{Notation}
We regard \(\HH^{n}\) as a right \(\HH\)-module.
Let \(e_i\) be the \(i\)th canonical basis vector of \(\RR^{n}\subset\HH^{n}\) (a \(1\) in position \(i\), zeros elsewhere); in particular, \(e_n=[0,\ldots,0,1]^{\T}\).
Because these entries are real (central), left/right scalar multiplication is unambiguous whenever \(e_i\) appears.

\section{QUATERNIONIC STATE–SPACE SISO SYSTEMS}
We study continuous-time linear time-invariant (LTI) single-input single-output (SISO) systems over the division algebra~\(\HH\).

\paragraph*{Model}
In state-space form, the dynamics are
\begin{equation}\label{eq:ct-ss-siso}
\begin{aligned}
\dot x(t) &= Ax(t) + Bu(t),\\
y(t)     &= Cx(t) + Du(t),
\end{aligned}
\end{equation}
where \(A\in\HH^{n\times n}\), \(B\in\HH^{n}\) (column), \(C\in\HH^{1\times n}\) (row), \(D\in\HH\),
the state vector is \(x(t)\in\HH^{n}\), and the input/output scalars are \(u(t),y(t)\in\HH\).

\paragraph*{Conventions}
We adopt the \emph{right-module} convention: matrices act from the left and quaternionic scalars multiply signals on the \emph{right}. Thus the products \(Bu(t)\) and \(Du(t)\) are right-multiplications by \(u(t)\); this guarantees right-linearity in both \(x\) and \(u\).

\paragraph*{Internal stability}
With $u\!\equiv\!0$, stability means $x(t)\!\to\!0$ (CT) or $x(k)\!\to\!0$ (DT) for all initial states.
Equivalently, using the standard right eigenvalues $\lambda$ of $A$, the system is stable iff $\Re\lambda<0$ for all $\lambda$; see~\cite{PereiraVettori2006}.

\paragraph*{Controllability}
As in the real/complex case, controllability means the ability to steer the state from any $x_0\in\HH^n$ to any $x_f\in\HH^n$ within a finite horizon by a suitable input.
In the single–input setting, the \emph{controllability matrix} is
\begin{IEEEeqnarray*}{rCl}
\mathcal C &=& \bigl[ B  AB  \cdots  A^{n-1}B \bigr] \in \HH^{n\times n},
\end{IEEEeqnarray*}
and the pair $(A,B)$ is controllable iff $\mathcal C$ is invertible over $\HH$
(equivalently, iff $\operatorname{rank}_{\HH}\mathcal{C} = n$) \cite{Jiang2020,Rodman2014,Zhang1997}.

\paragraph*{Observability}
Dually, observability means that the initial state can be uniquely reconstructed from finitely many output measurements.
In the single–output setting, the \emph{observability matrix} is
\begin{IEEEeqnarray}{rCl}
\mathcal{O} &=& 
\begin{bmatrix}
C\\
CA\\
\vdots\\
CA^{n-1}
\end{bmatrix}
\in \HH^{n\times n},
\label{Eq0OBSmat}
\end{IEEEeqnarray}
and the pair $(A,C)$ is observable iff $\mathcal{O}$ is invertible over $\HH$
(equivalently, iff $\operatorname{rank}_{\HH}\mathcal{O} = n$) \cite{Jiang2020,Rodman2014,Zhang1997}.

\paragraph*{Similarity}
All rank conditions are over $\HH$ and are invariant under similarity:
$A\mapsto S^{-1}AS$, $B\mapsto S^{-1}B$, $C\mapsto CS$ for any invertible $S\!\in\!\HH^{n\times n}$.

\section{OBSERVABLE COMPANION FORM AND COMPANION POLYNOMIAL}
In the classical (real/complex) LTI setting, two staple tools are the observable companion form and the characteristic or minimal polynomial. For quaternionic models, the observable companion form remains valid. By contrast, characteristic and minimal polynomials do not translate verbatim, since quaternions lack a straightforward determinant. Consequently, we replace them with a suitably defined quaternionic companion polynomial.

\begin{Def}[\emph{Observable companion form and companion polynomial}]
Let $A_o\in\HHnn$ and $C_o\in\HH^{1\times n}$ be
\begin{IEEEeqnarray}{rCl}
A_o &=& 
\begin{bmatrix}
 0 &\ldots& 0 & -a_{0}\\
1 & \ldots & 0 & -a_{1}\\
\vdots&\ddots&\vdots&\vdots\\
0 & \ldots & 1 & -a_{n-1}
\end{bmatrix},\\
C_o &=& \begin{bmatrix} 0 & 0 & \ldots & 1 \end{bmatrix},
\label{Eq:AoCoObsComp}
\end{IEEEeqnarray}
with $a_0,a_1,\ldots,a_{n-1}\in\HH$. We say that $(A_o,C_o)$ is in right \emph{observable companion (canonical) form}. The \emph{companion polynomial} associated with $(A_o,C_o)$ (or with $A_o$) is the monic quaternionic polynomial
\begin{IEEEeqnarray*}{rCl}
a(\lambda) &=& a_0 + a_1 \lambda + \cdots  + a_{n-1} \lambda^{n-1} + \lambda^n \in \HH[\lambda].
\end{IEEEeqnarray*}
\end{Def}

If \(A\in\HH^{n\times n}\) is \emph{cyclic} with respect to the row $C$—i.e., there exists \( C\in\HH^{1\times n}\) with 
the observability matrix \(\mathcal{O}\) invertible—then \(A\) is similar to the right observable companion form. The next theorem gives a constructive procedure.
\begin{The}[\emph{Determinant-free companion form}]
\label{Th:DeterminantFreeCompanion}
Let \(A\in\HH^{n\times n}\) and \(C\in\HH^{1\times n}\). Assume the SISO observability matrix
\(\mathcal{O}\) from \eqref{Eq0OBSmat} is invertible over \(\HH\).
\begin{enumerate}[leftmargin=*,noitemsep]
\item There exists an invertible similarity \(T\in\HH^{n\times n}\) such that
\[
A_o \;=\; T^{-1}AT,\qquad C_o \;=\; CT,
\]
and the pair \((A_o,C_o)\) is in the right observable companion form
\begin{equation}\label{Eq0RightObsForm-NEW}
\begin{IEEEeqnarraybox}[][c]{rCl}
A_o &=&
\begin{bmatrix}
0 & 0 & \cdots & -a_{0}\\
1 & 0 & \cdots & -a_{1}\\
\vdots & \ddots & \ddots & \vdots\\
0 & \cdots & 1 & -a_{n-1}
\end{bmatrix},\\
C_o &=& 
\begin{bmatrix}
0&0&\cdots&1
\end{bmatrix},
\end{IEEEeqnarraybox}
\end{equation}
where the scalars \(a_0,\ldots,a_{n-1}\in\HH\) are \emph{defined} as the (negatives of the) entries in the
rightmost column of \(A_o\).

\item The corresponding similarity matrix is uniquely determined as
\begin{IEEEeqnarray}{rCl}
s &:=& \mathcal{O}^{-1} e_{n}\ \in\ \HH^{n}, \IEEEnonumber\\
T &:=&
\begin{bmatrix}
s & As & \cdots & A^{n-1}s
\end{bmatrix},
\label{Eq:T-columns-rightmost-NEW}
\end{IEEEeqnarray}

\item For the given pair \((A,C)\), the matrices \(A_o,C_o\) constructed via \eqref{Eq:T-columns-rightmost-NEW} and the coefficients \(a_k\) (as defined above) are unique.
\end{enumerate}
\end{The}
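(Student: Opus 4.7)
My plan is to construct $T$ explicitly by the stated formula, verify by direct computation that it carries $(A,C)$ into the target observable companion pair, establish invertibility of $T$ without invoking a determinant, and then read off the scalars $a_k$ from the last column of $T^{-1}AT$. The uniqueness assertion will follow because every quantity in the recipe is forced by $(A,C)$.

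With $s := \mathcal{O}^{-1}e_n$ and $T := [\,s,\ As,\ \ldots,\ A^{n-1}s\,]$, the identity $\mathcal{O}s = e_n$ reads, componentwise, $CA^k s = \delta_{k,n-1}$ for $k=0,\ldots,n-1$, which yields $CT = C_o$ immediately. By construction, $AT e_k = A^k s = T e_{k+1}$ for $k=1,\ldots,n-1$, so the first $n-1$ columns of $T^{-1}AT$ are $e_2,\ldots,e_n$, matching the sub-diagonal $1$'s of the target $A_o$. Once $T$ is known to be invertible (shown next), I would \emph{define} $a_0,\ldots,a_{n-1}\in\HH$ by the unique expansion $T^{-1}A^n s =: -(a_0,a_1,\ldots,a_{n-1})^{\T}$; the last column of $T^{-1}AT$ is then $(-a_0,\ldots,-a_{n-1})^{\T}$, completing the match with \eqref{Eq0RightObsForm-NEW}.

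The substantive step---and the main obstacle in the quaternionic setting---is proving $T$ invertible without a determinant. My approach is to inspect $\mathcal{O}T$, whose entries $(\mathcal{O}T)_{ij} = CA^{i+j-2}s$ vanish for $i+j\le n$ and equal $1$ for $i+j=n+1$ by the same identity $\mathcal{O}s = e_n$; hence $\mathcal{O}T$ is anti-lower-triangular with $1$'s on the anti-diagonal. Right-multiplication by the order-reversing permutation $J$ (with $J e_k = e_{n+1-k}$) converts $\mathcal{O}T$ into a unit lower-triangular matrix, which is invertible over $\HH$ by back-substitution---no determinant is needed. Combined with the invertibility of $\mathcal{O}$ and $J$, this yields invertibility of $T$. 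This is the one place where the classical scalar-determinant argument must be replaced by a structural triangularisation over the right $\HH$-module.

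Finally, for uniqueness, every object in the construction is pinned down by $(A,C)$: $\mathcal{O}$ is fixed, whence $s$ is the unique solution of $\mathcal{O}s = e_n$, whence $T$, $A_o = T^{-1}AT$, $C_o = CT$, and the scalars $a_k$ are all determined. I would also note that a stronger intrinsic uniqueness comes for free: for any invertible similarity $T'$ producing an observable companion form with $CT' = e_n^{\T}$, the shape of $A_o'$ forces $CA^k(T'e_1) = \delta_{k,n-1}$, hence $T'e_1 = \mathcal{O}^{-1}e_n = s$; the sub-diagonal of $A_o'$ then gives $T'e_{k+1} = AT'e_k = A^k s$, so $T' = T$ and the coefficients $a_k$ are independent of any choice of bringing similarity.
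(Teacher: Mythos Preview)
Your proposal is correct and follows essentially the same approach as the paper: define $s=\mathcal{O}^{-1}e_n$, build $T$ as the Krylov matrix $[s,As,\ldots,A^{n-1}s]$, verify $CT=e_n^{\T}$ and the subdiagonal shift from $ATe_k=Te_{k+1}$, and read off the $a_k$ from the last column. Your treatment of the invertibility of $T$ (via the anti-triangular structure of $\mathcal{O}T$ and reduction to a unit lower-triangular matrix) makes explicit what the paper leaves implicit in its ``triangular pattern'' display, and your closing intrinsic-uniqueness argument (showing any $T'$ achieving the form must equal $T$) is a mild strengthening beyond the paper's recipe-based uniqueness.
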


\begin{proof}
We adapt the classical real/complex argument (cf.~\cite{AntsaklisMichel06}) to the quaternionic setting, avoiding determinants and Cayley–Hamilton.

\emph{Choice of basis and similarity.}
Let \(s:=\mathcal{O}^{-1}e_n\) and \(T=[\,s\ As\ \cdots\ A^{n-1}s\,]\).
From \(\mathcal{O} s=e_n\) we obtain \(Cs=\cdots=CA^{n-2}s=0\) and \(CA^{n-1}s=1\), hence
\[
CT=\bigl[\,Cs\ \ CAs\ \ \cdots\ \ CA^{n-1}s\,\bigr]=e_n^{\T}.
\]
Moreover, for each \(k=0,\ldots,n-1\), the row \(CA^{k}T\) has the
triangular pattern
\begin{equation}\label{Eq:triangular-pattern}
CA^{k}T=\bigl[\,\underbrace{0\ \cdots\ 0}_{n-k-1}\ \ 1\ \ \star \ \cdots\ \star\,\bigr],
\end{equation}
where `$\star$' denotes unspecified entries.
Hence, \((CA^{k}T)e_j=0\) for \(j<n-k\) and \((CA^{k}T)e_{n-k}=1\).

\emph{Define the transformed pair.}
Set \(M:=T^{-1}AT\) and \(C_o:=CT=e_n^{\T}\).

\emph{Output row.}
From \(CT=e_n^{\T}\) we have \(C_o=e_n^{\T}=[\,0\ \cdots\ 0\ 1\,]\).

\emph{Subdiagonal shift property.}
Write \(T=[w_1,\ldots,w_n]\) with \(w_1=s\) and \(w_{j+1}=Aw_j\) by construction.
Then
\begin{IEEEeqnarray*}{c}
    M e_j = T^{-1}AT e_j = T^{-1}Aw_j = T^{-1}w_{j+1} = e_{j+1},
\end{IEEEeqnarray*}
for $j=1,\ldots,n-1,$
so \(M\) has ones on the subdiagonal and zeros below it.

\emph{Rightmost column and the definition of \(a_k\).}
By definition of the \emph{observable companion form}, the last column of \(M\) is some vector
\(-[a_0,\ldots,a_{n-1}]^{\T}\in\HH^{n}\). We \emph{define} the coefficients \(a_k\in\HH\) to be these
entries. With the output row and subdiagonal shift established above, this uniquely identifies \(M\) as the observable companion \(A_o\) and yields \eqref{Eq0RightObsForm-NEW}.

\emph{Uniqueness.}
The vector \(s\) is uniquely determined by \(\mathcal{O} s=e_n\), hence \(s=\mathcal{O}^{-1}e_n\).
Then the recursion \(w_{j+1}=Aw_j\) uniquely fixes the columns of \(T=[\,s,As,\ldots,A^{n-1}s\,]\).
With \(C_o=e_n^{\T}\), subdiagonal ones and the last column \(-[a_0,\ldots,a_{n-1}]^{\T}\), the matrix
\(A_o=M\) is unique as well.
\end{proof}
\begin{Rem}
In the real/complex (central) case, the last-column coefficients $a_k$ also satisfy 
$CA^{n}=-\sum_{k=0}^{n-1}(CA^{k})\,a_k$. Over $\HH$, a right–linear expansion still holds but with (generally different) scalars $b_k\in\HH$:
$CA^{n}=-\sum_{k=0}^{n-1}(CA^{k})\,b_k$, and there is no simple closed-form relation between $a_k$ and $b_k$ due to noncommutativity; if the $a_k$ are real, then $b_k=a_k$.
\end{Rem}

\begin{Rem}
Compared with the classical \(\RR/\CC\) case, the construction proceeds \emph{in reverse}.
Instead of first computing a characteristic or minimal polynomial (via determinants) and then forming a companion realization, 
we directly build the \emph{observable companion form} from the invertible \emph{observability matrix}~\(\mathcal{O}\), 
and only afterwards \emph{read off} the coefficients of the companion polynomial from the rightmost column of \(A_o\).
This approach is determinant–free and avoids the use of characteristic polynomials, Cayley–Hamilton, or adjugates,
whose straightforward generalizations to~\(\HH\) are problematic.
\end{Rem}
\begin{Ex}\label{ExCompanion}
Let
\begin{IEEEeqnarray}{rCl} \label{Eq0ACEx}
A&=&\tfrac14 \begin{bmatrix} -2+\j  &   \i  \\
     \i   &   -2-\j \end{bmatrix},\;\;
C=\begin{bmatrix} \j & \k \end{bmatrix}.
\end{IEEEeqnarray}
Then the observability matrix inverse is
\begin{IEEEeqnarray}{c}
    \mathcal{O}^{-1} = \tfrac12
    \begin{bmatrix}
    -1-3\j  &  -4\j \\
     \i+\k   &   4\k
    \end{bmatrix}. \label{Eq0OBSi} \IEEEyesnumber
\end{IEEEeqnarray}
and the similarity transformation matrix \eqref{Eq:T-columns-rightmost-NEW} reads
\begin{IEEEeqnarray}{c}
    T = \tfrac12
    \begin{bmatrix}
    -4\j  &   1+\j \\ 
     4\k  &  -\i-3\k
    \end{bmatrix}. \label{Eq0T} \IEEEyesnumber
\end{IEEEeqnarray}
The right observable companion form of $A$ is
{\setlength{\arraycolsep}{2pt}
\begin{IEEEeqnarray}{rCl}\label{Eq0ExAoCo}
A_o &=& 
\begin{bmatrix}
   0  &  -0.25+0.25\j \\
   1  &  -1+0.5\j   
\end{bmatrix}, \;
C_o=\begin{bmatrix} 0 & 1 \end{bmatrix},  
\end{IEEEeqnarray}
}
and the associated companion polynomial is
\begin{IEEEeqnarray}{rCl}\label{Eq0Excomppol}
a(\lambda) &=& 0.25-0.25\j + (1-0.5\j)\lambda + \lambda^2.
\end{IEEEeqnarray}
\end{Ex}
We obtain an equally clean annihilating identity for the right \emph{observable} companion, again without invoking a full Cayley–Hamilton theorem over $\HH$ (cf.~\cite{ChapmanMachen16, Rodman2014}). We will use left substitution for the companion polynomial at $A_o.$
\begin{The}[\emph{Companion polynomial annihilates its observable companion matrix over $\HH$}]
\label{Th:ObsCompanionAnnihilates}
Let $A_o\in\HH^{n\times n}$ be in observable companion form
\begin{IEEEeqnarray}{rCl}
A_o &=&
\begin{bmatrix}
0 & 0 & \cdots & -a_{0}\\
1 & 0 & \cdots & -a_{1}\\
\vdots & \ddots & \ddots & \vdots\\
0 & \cdots & 1 & -a_{n-1}
\end{bmatrix},
\label{eq:Ao-obs-form}
\end{IEEEeqnarray}
with $a_0,\ldots,a_{n-1}\in\HH$, and define the companion polynomial
$a(\lambda) = a_0 + a_1\lambda + \cdots + a_{n-1}\lambda^{n-1} + \lambda^{n} \ \in \HH[\lambda].$
Then $a(\lambda)$ annihilates $A_o$ under \emph{left substitution}:
\begin{IEEEeqnarray}{c}
Ia_0  +  A_o a_1 + \cdots + A_o^{n-1} a_{n-1}+ A_o^{n} = 0.
\label{eq:left-annihilation-obs}
\end{IEEEeqnarray}
\end{The}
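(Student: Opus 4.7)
The plan is to show the matrix identity
\(P := I a_0 + A_o a_1 + \cdots + A_o^{n-1} a_{n-1} + A_o^{n} = 0\)
by verifying that \(P\) annihilates every canonical basis column \(e_j\) for \(j = 1,\ldots,n\); since those columns generate \(\HH^{n}\) as a right module and all the operations involved are right-linear, this is enough.

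First I would exploit the subdiagonal shift already built into \(A_o\). From \eqref{eq:Ao-obs-form}, \(A_o e_j = e_{j+1}\) for \(j=1,\ldots,n-1\), so the Krylov identification \(A_o^{k} e_1 = e_{k+1}\) holds for \(k=0,\ldots,n-1\). The last column of \(A_o\) then reads \(A_o e_n = -\sum_{k=0}^{n-1} e_{k+1}\, a_k\), which combined with the Krylov identification gives \(A_o^{n} e_1 = -\sum_{k=0}^{n-1} (A_o^{k} e_1)\, a_k\), i.e.\ \(P e_1 = 0\). So \(e_1\) is the ``generating'' column and the defining relation encoded in the rightmost column of \(A_o\) is exactly the desired annihilation restricted to that column.

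The second step is to propagate \(P e_1 = 0\) to every remaining column via \(e_j = A_o^{j-1} e_1\). The key noncommutative observation is that \((A_o^{k} a_k)\, e_j = A_o^{k}(e_j a_k)\), because \(e_j\) has real (central) entries, so the scalar \(a_k\) slides freely past \(e_j\). Combined with the always-valid compatibility \((Mv)\alpha = M(v\alpha)\) between right-scalar action and left-matrix multiplication in a right \(\HH\)-module, this yields
\[
P e_j = \sum_{k=0}^{n-1} A_o^{k+j-1} e_1\, a_k + A_o^{n+j-1} e_1 = A_o^{j-1}\Bigl(\sum_{k=0}^{n-1} (A_o^{k} e_1)\, a_k + A_o^{n} e_1\Bigr) = A_o^{j-1}\, P e_1 = 0.
\]
Hence \(P\) vanishes on every \(e_j\), and therefore \(P = 0\).

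The main pitfall to avoid is the tempting but false shortcut \(P A_o^{j-1} = A_o^{j-1} P\): the scalars \(a_k\) do not commute with the entries \((A_o^{j-1})_{lp}\) in general, so \(P\) and \(A_o\) do not commute as matrices, and no naive ``pull-out'' at the matrix level is available. The argument survives only because evaluation at the \emph{real} column \(e_1\) lets us move each \(a_k\) to the right of \(e_1\) for free, after which only the right-module compatibility \((Mv)\alpha = M(v\alpha)\) is used. This is essentially the reason the left-substituted identity remains valid over \(\HH\) even though a full quaternionic Cayley–Hamilton theorem is unavailable.
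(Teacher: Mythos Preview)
Your proof is correct and follows essentially the same approach as the paper's: both identify the Krylov columns \(A_o^{k}e_1=e_{k+1}\), read off \(Pe_1=0\) from the rightmost column, and then propagate to all \(e_j\) via the shift \(e_j=A_o^{j-1}e_1\). The paper packages the propagation step as a recurrence \(u_{n+j}=-\sum_k u_{k+j}a_k\) obtained by left-multiplying the base identity by \(A_o^{j}\), whereas you factor \(Pe_j=A_o^{j-1}Pe_1\) directly; these are the same computation, and your explicit remark that the centrality of the real entries of \(e_j\) is what licenses moving \(a_k\) past \(e_j\) is a point the paper leaves implicit.
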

\begin{proof}
\emph{Setup and evaluation convention.}
Let $A_o$ be in the observable companion form \eqref{eq:Ao-obs-form} with last column
$-\bigl[a_0,\ldots,a_{n-1}\bigr]^{\T}$, where $a_k\in\HH$.
Following Section~2, we evaluate polynomials at $A_o$ by \emph{left substitution}:
\begin{IEEEeqnarray}{c}
a(A_o)_\ell=A_o^{n}+\sum_{k=0}^{n-1}A_o^{k}\,a_k.
\end{IEEEeqnarray}
Our goal is to show $a(A_o)_\ell=0$.

\emph{Krylov columns and the shift property.}
Define $u_k:=A_o^{k}e_1$ for $k\ge 0$. Since $A_o$ has ones on the subdiagonal,
\[
A_o e_j=e_{j+1}\quad (j=1,\ldots,n-1),
\]
hence $u_k=e_{k+1}\,\;(k=0,\ldots,n-1)$, and thus
\begin{IEEEeqnarray}{c}
  \mathcal{K}:=
  \begin{bmatrix}
      u_0&\ldots&u_{n-1}
  \end{bmatrix}=I_n . 
\end{IEEEeqnarray}

\emph{The step-$n$ recurrence from the last column.}
Reading off the last column of $A_o$ gives
\[
A_o^{n}e_1
=
-\begin{bmatrix}a_0\\ a_1\\ \vdots\\ a_{n-1}\end{bmatrix}
=
-\sum_{k=0}^{n-1} u_k\,a_k .
\]
Left-multiplying by $A_o^{j}$ and using $A_o^{j}u_k=u_{k+j}$ yields, for all $j\ge 0$,
\begin{equation}\label{eq:recurrence-left-eval}
u_{n+j}
=
-\sum_{k=0}^{n-1} u_{k+j}\,a_k .
\end{equation}

\emph{Annihilation under left substitution.}
For any $j\ge 0$,
\begin{IEEEeqnarray}{rCl}
a(A_o)_\ell\,u_j
&=&
\Bigl(A_o^{n}+\sum_{k=0}^{n-1}A_o^{k}a_k\Bigr)u_j \IEEEnonumber \\
&=&
u_{n+j}+\sum_{k=0}^{n-1}u_{k+j}\,a_k
=
0
\end{IEEEeqnarray}
by \eqref{eq:recurrence-left-eval}. In particular, this holds for $j=0,1,\ldots,n-1$, so
\[
a(A_o)_\ell\;\mathcal K = 0.
\]
Since $\mathcal K=I_n$, we conclude $a(A_o)_\ell=0$, which is the claimed annihilation under left substitution.
\end{proof}
\begin{Rem}
This is \emph{not} a quaternionic Cayley–Hamilton theorem (which generally fails over~\(\HH\)).
Rather, it is a \emph{one-sided} annihilation under left substitution that hinges on the observable companion’s
shift/subdiagonal structure and the rightmost-column read-off of the coefficients.
It neither asserts a two-sided polynomial identity nor identifies \(a(\lambda)\) with any
characteristic/minimal polynomial; it also does \emph{not} automatically extend to arbitrary
similar realizations outside the observable companion form.
\end{Rem}

The right spectrum of the \emph{observable} companion matrix is likewise composed of the similarity classes of the right zeros of its companion polynomial.

\begin{The}[\emph{Right zeros form the right spectrum of the observable companion matrix}]
\label{Th:Ao-roots-spectrum}
Let $A_o\in\HH^{n\times n}$ be in observable companion form~\eqref{eq:Ao-obs-form}, and let
$a(\lambda) = a_0 + a_1\lambda + \cdots + a_{n-1}\lambda^{n-1} + \lambda^{n}
\in \HH[\lambda]$
be its companion polynomial (with $a_k\in\HH$ defined by the last column of $A_o$). Then:
\begin{enumerate}[leftmargin=*,noitemsep]
\item If $\lambda\in\HH$ is a \emph{right} zero of $a(\lambda)$,
then $\lambda$ is a \emph{right } eigenvalue of $A_o$. The corresponding eigenvector
$v = \begin{bmatrix} v_1,\ldots,v_n\end{bmatrix}$
is obtained by the backward recurrence
\begin{IEEEeqnarray}{c}\label{Eq0BcwrdRec}
v_n=\!1;\;v_{j-1}\!=\!a_{j-1}\!+\!v_j\lambda ,\;j=n,n-1,\ldots,2,   
\end{IEEEeqnarray}
and satisfies $A_o v = v\lambda$.
\item Hence, the right spectrum of $A_o$ consists exactly of the similarity classes of the right zeros of $a(\lambda)$.
\item Similarly, every \emph{left zero} $\mu$ of $a(\lambda)$ is (up to similarity) also a right eigenvalue of $A_o$.
\end{enumerate}
\end{The}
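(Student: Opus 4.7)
The plan is to prove the three parts in order, leaning on the subdiagonal shift structure of $A_o$ (already exploited in Theorems~1 and~2) rather than on any spectral machinery.

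For part~(1), I would substitute the backward recurrence \eqref{Eq0BcwrdRec} into $A_o v = v\lambda$ and verify it row by row. Unrolling the recurrence gives the closed form $v_{n-k}=a_{n-k}+a_{n-k+1}\lambda+\cdots+a_{n-1}\lambda^{k-1}+\lambda^{k}$. Because $v_n=1$, rows $2,\dots,n$ of $A_o$ read $(A_o v)_j=v_{j-1}-a_{j-1}$, and the requirement $(A_o v)_j=v_j\lambda$ is exactly the recurrence $v_{j-1}=a_{j-1}+v_j\lambda$. The only nontrivial row is the first, $(A_o v)_1=-a_0=v_1\lambda$, which after right-multiplying the closed form for $v_1$ by $\lambda$ becomes $a_1\lambda+a_2\lambda^2+\cdots+a_{n-1}\lambda^{n-1}+\lambda^{n}=-a_0$, i.e.\ $a(\lambda)_{\err}=0$. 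Thus the hypothesis that $\lambda$ is a right zero is precisely what closes the top row, and nothing more.

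For part~(2), one direction is already in part~(1). For the converse, I would start from an arbitrary right eigenpair $(v,\mu)$ of $A_o$ and first argue that $v_n\neq0$: if $v_n=0$, then the subdiagonal row-$n$ equation gives $v_{n-1}=v_n\mu=0$, and the same argument cascades upward to force $v=0$, a contradiction. I would then right-rescale $\tilde v:=v\,v_n^{-1}$, which gives $A_o\tilde v=\tilde v\tilde\mu$ with $\tilde\mu:=v_n\mu v_n^{-1}\sim\mu$ and $\tilde v_n=1$. The componentwise equations for rows $2,\dots,n$ now \emph{are} the backward recurrence \eqref{Eq0BcwrdRec} with eigenvalue $\tilde\mu$, so $\tilde v$ agrees with the vector constructed in part~(1), and the top row collapses to $a(\tilde\mu)_{\err}=0$. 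Hence $[\mu]=[\tilde\mu]$ lies among the similarity classes of right zeros, which together with part~(1) gives the set equality $\sigma_\err(A_o)=\bigcup_i[\lambda_i]$. For the multiplicity statement I would invoke the standard degree-counting fact for quaternionic polynomials (\cite{GordonMotzkin65,PogoruiShapiro2004}) so that the $n$ right zeros (in the finite case) match the $n$ entries of the right spectrum.

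Part~(3) reduces immediately to the Gordon--Motzkin correspondence already cited in Section~2: if $\mu$ is a left zero of $a(\lambda)$, there exists $\alpha\neq0$ with $a(\alpha^{-1}\mu\alpha)_{\err}=0$; by parts~(1)--(2), $\alpha^{-1}\mu\alpha$ is a right eigenvalue of $A_o$, and by similarity invariance of the right spectrum so is $\mu$ (up to class).

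The main obstacle I anticipate is part~(2), specifically the $v_n\neq 0$ argument and the noncommutative rescaling $v\mapsto v v_n^{-1}$: one must carefully track that right-rescaling an eigenvector twists the eigenvalue within its similarity class (from $\mu$ to $v_n\mu v_n^{-1}$) and that this twist is exactly what is needed to make the normalized vector satisfy the backward recurrence. Everything else is direct verification using the subdiagonal shift and the definition of the coefficients $a_k$ as the rightmost column of $A_o$.
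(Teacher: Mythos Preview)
Your proposal is correct and follows essentially the same route as the paper: row-by-row verification for part~(1), the $v_n\neq0$ cascade plus conjugation $\mu\mapsto v_n\mu v_n^{-1}$ for part~(2), and Gordon--Motzkin for part~(3). The only cosmetic difference is that you normalize the eigenvector first ($\tilde v=v\,v_n^{-1}$) and then read off the recurrence, whereas the paper keeps $x_n$ throughout and conjugates at the end; the two orderings are equivalent.
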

\begin{proof}
\emph{(1) Right zeros $\Rightarrow$ right eigenpairs.}
Assume $a(\lambda)_\err=0$ and construct $v\in\HH^n$ by the backward recurrence \eqref{Eq0BcwrdRec}. We claim $A_o v=v\lambda$.

For rows $j=2,\ldots,n$ of $A_o v$, the observable companion structure gives
\begin{IEEEeqnarray*}{rCl}
(A_o v)_j &=& v_{j-1} - a_{j-1}\,v_n\\
&=& (v_j\lambda + a_{j-1}) - a_{j-1}
= v_j\lambda
= (v\lambda)_j.
\end{IEEEeqnarray*}
For the first row,
\begin{IEEEeqnarray*}{rCl}
(A_o v)_1 &=& -a_0\,v_n \;=\; -a_0,\\
(v\lambda)_1 &=& v_1\lambda
= \bigl(a_1+\cdots+a_{n-1}\lambda^{n-2}+\lambda^{n-1}\bigr)\lambda\\
&=& a_1\lambda+\cdots+a_{n-1}\lambda^{n-1}+\lambda^{n}.
\end{IEEEeqnarray*}
Hence $(A_o v)_1=(v\lambda)_1$ is equivalent to
$a_0 + \sum_{k=1}^{n-1} a_k \lambda^k + \lambda^n=0$, i.e.\ $a(\lambda)_\err=0$,
which holds by assumption. Therefore $A_o v=v\lambda$ and $\lambda$ is a right eigenvalue.

\emph{(2) Right eigenpairs $\Rightarrow$ zeros up to similarity.}
Conversely, suppose $A_o x = x\mu$ with $x\neq 0$.
From rows $j=2,\ldots,n$ of the observable companion,
\[
x_{j-1} - a_{j-1}x_n = x_j\mu,
\]
which rearranges to $x_{j-1}= a_{j-1}x_n+x_j\mu$. Iterating backwards yields, for every $k=1,\ldots,n-1,$
\begin{IEEEeqnarray*}{c}
x_k = a_k x_{n} + a_{k+1}x_{n} \mu+\cdots+a_{n-k}x_{n} \mu^{n-k-1}+x_{n}\mu^{n-k}.
\end{IEEEeqnarray*}
In particular, for $k=1$
\begin{IEEEeqnarray*}{c}
x_1 = a_1x_n + a_2x_n\mu+\cdots+a_{n-1}x_n\mu^{n-2}+x_n\mu^{n-1}.
\end{IEEEeqnarray*}
Using the first row of $A_o x=x\mu$ and substituting for $x_1$,
\begin{IEEEeqnarray*}{c}
-a_0x_n
= x_1\mu
= a_1x_n\mu + \cdots+a_{n-1}x_n\mu^{n-1}+x_n\mu^{n}
\end{IEEEeqnarray*}
Note that $x_n\neq 0$; otherwise, from the $n$-th row $x_{n-1}-a_{n-1}x_n=x_n\mu$ we get $x_{n-1}=0$, and iterating upward forces $x=0$, a contradiction. Hence set the similar quaternion
\[
\lambda:=x_n\mu x_n^{-1},
\]
so that $x_n\mu^k=\lambda^k x_n$ for all $k$. Then
\begin{IEEEeqnarray*}{c}
-a_0x_n
= a_1\lambda x_n+\cdots+  a_{n-1}\lambda^{n-1} x_n+\lambda^{n} x_n,   
\end{IEEEeqnarray*}
and right–cancelling $x_n$ gives
\[
a_0 + a_1\lambda+\cdots+  a_{n-1}\lambda^{n-1} +\lambda^{n} = 0,
\]
i.e.\ $a(\lambda)_\err=0$.
Thus every right eigenvalue $\mu$ is \emph{similar} to a right zero $\lambda$, proving the spectral statement.

\emph{(3) Left zeros.}
By Gordon–Motzkin, left and right zeros of a quaternionic polynomial lie in the same similarity classes;
hence every left zero is similar to a right zero and therefore (up to similarity) to a right eigenvalue of $A_o$.

Combining (1)–(3) gives the three claims of the theorem.
\end{proof}

\begin{Ex}
For $A_o$ from \eqref{Eq0ExAoCo} with $a(\lambda)$ from \eqref{Eq0Excomppol}, we have
\begin{IEEEeqnarray*}{rCl}
a(A_o)_\ell &=&  
I_2(0.25\!-\!0.25\j) +
\begin{bmatrix}
   0  &  \!-\!0.25\!+\!0.25\j \\
   1  &  \!-\!1\!+\!0.5\j   
\end{bmatrix}
(1\!-\!0.5\j)\\
&&\negmedspace{}+
\begin{bmatrix}
\!-\!0.25\!+\!0.25\j & 0.125\!-\!0.375\j \\
    \!-\!1\!+\!0.5\j & 0.5\!-\!0.75\j     
\end{bmatrix}
=
\begin{bmatrix}
    0&0\\0&0
\end{bmatrix}.
\end{IEEEeqnarray*}
\end{Ex}

The companion polynomial can likewise be attached to any \emph{cyclic} quaternionic matrix $A$
via its \emph{observable} companion realization $(A_o,C_o)$.

\begin{Def}[\emph{Companion polynomial of an observable pair}]
Let $(A,C)$ with $A\in\HH^{n\times n}$ and $C\in\HH^{1\times n}$ be such that the
observability matrix $\mathcal{O}$ from \eqref{Eq0OBSmat}
is invertible over $\HH$.
Let $(A_o, C_o)$ be the unique right observable companion realization of $(A, C)$ (Theorem~\ref{Th:DeterminantFreeCompanion}),
and write the last column of $A_o$ as $-\bigl[a_0,\ldots,a_{n-1}\bigr]^{\T}$ with $a_k\in\HH$.
The \emph{companion polynomial} of $(A,C)$ is the monic quaternionic polynomial
\begin{IEEEeqnarray}{rCl}
a(\lambda) &=& a_{0}+a_{1}\lambda+\cdots+a_{n-1}\lambda^{n-1}+\lambda^{n}.
\label{Eq:obs-comppoly}
\end{IEEEeqnarray}
\end{Def}
\begin{Rem}
Throughout, we use the right observable companion form. Other orientations—left/upper/lower variants obtained by permutations—are equivalent and lead,
after the obvious similarity/permutation, to the \emph{same} monic companion polynomial.
\end{Rem}
\begin{The}[\emph{Right zeros form the right spectrum of $A$ (observable case)}]
\label{Th:AC-roots}
Let $(A,C)$ be an observable quaternionic SISO pair and let $a(\lambda)$
be its companion polynomial from \eqref{Eq:obs-comppoly}. Then
\begin{enumerate}[leftmargin=*,noitemsep]
\item Any right root $\lambda\in\HH$ of $a(\lambda)$ yields a right eigenvalue (up to similarity) of $A$.
\item Hence, the right spectrum of $A$ consists exactly of the similarity classes of the right zeros of $a(\lambda)$.
\item Likewise, each left root $\mu$ of $a(\lambda)$ is similar to some right eigenvalue of $A$.
\end{enumerate}
\end{The}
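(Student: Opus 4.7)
The strategy is to reduce the claim to Theorem~\ref{Th:Ao-roots-spectrum} by a similarity transport, since the companion polynomial $a(\lambda)$ of $(A,C)$ is \emph{defined} to be the companion polynomial of its observable companion realization $(A_o,C_o)$. Thus the whole argument boils down to showing that right spectra (as unions of similarity classes, with multiplicities) are invariant under the similarity $T$ produced by Theorem~\ref{Th:DeterminantFreeCompanion}.

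First I would invoke Theorem~\ref{Th:DeterminantFreeCompanion} to obtain an invertible $T\in\HH^{n\times n}$ with $A_o=T^{-1}AT$ and $C_o=CT$, the last column of $A_o$ furnishing the coefficients $a_0,\ldots,a_{n-1}\in\HH$ of $a(\lambda)$ in \eqref{Eq:obs-comppoly}. Next I would establish the elementary \emph{similarity transport of right eigenpairs}: if $A_o v=v\lambda$ with $v\neq 0$, then $w:=Tv\neq 0$ and
\begin{IEEEeqnarray*}{rCl}
Aw \;=\; A(Tv) \;=\; T(T^{-1}AT)v \;=\; T(A_o v) \;=\; T(v\lambda) \;=\; (Tv)\lambda \;=\; w\lambda,
\end{IEEEeqnarray*}
so $\lambda$ is a right eigenvalue of $A$ with eigenvector $w$. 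Conversely, $Aw=w\mu$ with $w\neq0$ gives $A_o(T^{-1}w)=(T^{-1}w)\mu$. Since similarity $\lambda\mapsto\alpha^{-1}\lambda\alpha$ of right eigenvalues is intrinsic to the class $[\lambda]$, this shows $\sigma_\err(A)=\sigma_\err(A_o)$ as unions of similarity classes (counting multiplicities, because the map $v\mapsto Tv$ is a $\HH$-right-linear bijection of eigenspaces).

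Combining this transport with Theorem~\ref{Th:Ao-roots-spectrum}(1)--(2) immediately yields claims (1) and (2): every right zero $\lambda$ of $a(\lambda)$ is a right eigenvalue of $A_o$ (with eigenvector $v$ built by the backward recurrence \eqref{Eq0BcwrdRec}), hence a right eigenvalue of $A$ with eigenvector $Tv$, and every right eigenvalue of $A$ lies in the similarity class of some right zero of $a(\lambda)$. For claim (3), I would quote Gordon--Motzkin, exactly as in the proof of Theorem~\ref{Th:Ao-roots-spectrum}: any left zero $\mu$ of $a(\lambda)$ is similar to a right zero, and therefore (by claim (1)) similar to a right eigenvalue of $A$.

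The only subtle point—hence the mild obstacle—is making the multiplicity bookkeeping explicit, since over $\HH$ multiplicities for right eigenvalues need to be interpreted via the companion-polynomial count (isolated vs.\ spherical zeros, cf.\ the discussion after Theorem~\ref{Th:DeterminantFreeCompanion}). I would handle this by observing that $T$ is an $\HH$-right-linear bijection, so it carries the right eigenvector data of $A_o$ bijectively onto that of $A$, and the multiplicity statement of Theorem~\ref{Th:Ao-roots-spectrum}(2) transfers verbatim without any further algebraic input.
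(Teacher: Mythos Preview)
Your proposal is correct and follows exactly the paper's approach: invoke Theorem~\ref{Th:DeterminantFreeCompanion} to get $A_o=T^{-1}AT$, use similarity invariance of the right spectrum to conclude $\sigma_\err(A)=\sigma_\err(A_o)$, and then appeal to Theorem~\ref{Th:Ao-roots-spectrum}. The paper's proof is in fact a two-line version of yours; your explicit verification of the eigenpair transport $v\mapsto Tv$ and the multiplicity remark simply unpack what the paper asserts in the phrase ``so $\sigma_\err(A)=\sigma_\err(A_o)$.''
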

\begin{proof}
By Theorem~\ref{Th:DeterminantFreeCompanion}, $A$ is similar over $\HH$ to its observable companion $A_o$, so $\sigma_\err(A)=\sigma_\err(A_o)$. The claims then follow from Theorem~\ref{Th:Ao-roots-spectrum}.
\end{proof}

\begin{Rem}
As in the controllable case, the companion polynomial shares several roles of the
characteristic/minimal polynomial but not all: in general, it \emph{does not} annihilate $A$;
it only annihilates the observable companion $A_o$. Hence $a(A_o)_\ell=0$ while typically $a(A)_\ell\neq 0$.
\end{Rem}

\begin{Rem}
When \(A\) is cyclic, it is natural to speak of the \emph{companion polynomial of \(A\)}
constructed from any observable SISO pair \((A,C)\).
Different cyclic choices of \(C\) may yield different polynomials. Still, their right zeros agree up to quaternionic similarity classes: for every such \(C\), the right zeros of the associated polynomial determine the same set of similarity classes, namely the right spectrum of \(A\).
For a fixed pair \((A,C)\), however, the observable companion \((A_o,C_o)\) and the attached polynomial are unique.
\end{Rem}

\section{OBSERVERS}
An observer is a dynamic estimator that reconstructs the state $\hat x$ from known inputs \(u\) and measured outputs \(y\) by injecting the innovation \(y-C\hat x\) through a gain \(L\). It enables implementable state feedback \(u=K\hat x\), monitoring, and prediction. For quaternionic systems, the same observer structure is used, but pole assignment must be interpreted in terms of right-eigenvalue similarity classes.

A well-known \emph{Luenberger full-order observer}  \cite{Luenberger1966} is described by
\begin{IEEEeqnarray}{rCl}
\dot{\hat x} &=& A\hat x + Bu + L(y - C\hat x), \label{eq:ct-fo}\\
e &:=& x-\hat x,\quad \dot e = (A-LC)e. \IEEEnonumber\label{eq:ct-fo-err}
\end{IEEEeqnarray}
and the design goal is to place eigenvalues of the observer error dynamics (observer poles) $\sigma_\err (A-LC)$ in the open left half-plane. 
%

\section{DUALITY OVER \texorpdfstring{$\HH$}{HH}}
Over $\RR$, controller--observer duality is expressed through transpose: controllability of $(A,B)$ is equivalent to observability of $(A^{\T},B^{\T})$, and one typically transfers gains via $L=K^{\T}$; see, e.g.,~\cite{AntsaklisMichel06}.

Over $\HH$, the transpose is no longer the correct dual object. The appropriate operation is the \emph{conjugate transpose} $M^*=\overline{M}^{\T}$, since it preserves the quaternionic Hermitian form,
\(
\langle Ax,y\rangle=\langle x,A^*y\rangle,
\)
and is compatible with quaternionic spectral theory~\cite{Zhang1997,Rodman2014,FarenickPidkowich2003,AlpayColomboKimsey2016}. Thus the duality mapping must be written as
\begin{IEEEeqnarray*}{c}
A \leftrightarrow A^*, \qquad
B \leftrightarrow C^*, \qquad
C \leftrightarrow B^*, \qquad
D \leftrightarrow D^* .
\end{IEEEeqnarray*}
Accordingly, the original system and its dual are linked by
\begin{IEEEeqnarray*}{rCl}
\mathcal{O}_n(A^*,B^*) &=& \mathcal{C}_n(A,B)^* ,
\end{IEEEeqnarray*}
because \((AB)^*=B^*A^*\); see~\cite{Zhang1997,Rodman2014}. Since rank is preserved under $^*$, this yields
\begin{IEEEeqnarray*}{c}
(A,B)\ \text{controllable}\ \Longleftrightarrow\ (A^*,B^*)\ \text{observable}, \\
(A,C)\ \text{observable}\ \Longleftrightarrow\ (A^*,C^*)\ \text{controllable}.
\end{IEEEeqnarray*}
Therefore, if $K\in\HH^{1\times n}$ is a state-feedback gain for the controllable dual pair $(A^*,C^*)$, then the corresponding observer gain for the original system is
\begin{IEEEeqnarray*}{rCl}
L &=& K^* \in \HH^{n\times 1}.
\end{IEEEeqnarray*}
Moreover,
\[
\sigma_\err(A-LC)=\sigma_\err(A^*-C^*K),
\]
because the adjoint converts right eigenvalues into conjugates, and quaternionic similarity classes satisfy $[q]=[\overline q]$; see~\cite{FarenickPidkowich2003,AlpayColomboKimsey2016}. Hence both matrices have the same assigned right-eigenvalue similarity classes. See also quaternion-valued system treatments in~\cite{Jiang2020,ChenFecKanWang2022}.

\section{QUATERNIONIC OBSERVER POLE PLACEMENT VIA COMPANION FORM}
In $\RR / \CC$, observer design is dual to state feedback; over 
$\HH$ the same SISO workflow applies: transform to observable companion form, read the companion polynomial from the rightmost column, interpret poles as right-eigenvalue similarity classes, update coefficients in observer coordinates, and map the gain back. The construction is determinant-free, avoiding characteristic/minimal polynomials, Cayley–Hamilton, and adjugates.

\begin{The}[\emph{Pole placement via observable companion form}]
Assume a SISO observable pair \((A,C)\) with \(A\in\HH^{n\times n}\), \( C\in\HH^{1\times n}\). 
Let \((A_o,C_o)\) be its right observable companion form via an invertible \(T\in\HH^{n\times n}\), i.e.
\begin{IEEEeqnarray}{rCl}
A_o&=&T^{-1}AT, \quad C_o = CT = \bigl[0\ 0\ \cdots\ 1\bigr], \label{eq:obs-sim}
\end{IEEEeqnarray}
and write the rightmost column of \(A_o\) as \(\bigl[-a_0,\ldots,-a_{n-1}\bigr]^{\T}\).
Let
\begin{IEEEeqnarray*}{rCl}
a(\lambda)&=&a_0+a_1\lambda+\cdots+a_{n-1}\lambda^{n-1}+\lambda^n
\end{IEEEeqnarray*}
be the associated companion polynomial. Given any monic
\begin{IEEEeqnarray*}{rCl}
a_d(\lambda)&=&d_{0}+d_{1}\lambda+\cdots+d_{n-1}\lambda^{n-1}+\lambda^{n},
\end{IEEEeqnarray*}
there exists a unique observer gain \(L\in\HH^{n\times1}\) such that the
right spectrum of
\[
A_{\mathrm{obs}}:=A-LC
\]
consists exactly of the similarity classes induced by the right roots of \(a_d\),
counted with multiplicity.
Moreover, with \(L_o:=T^{-1}L\) we have
\begin{IEEEeqnarray}{rCl}
L &=& TL_o, \label{eq:L-backmap}\\
L_o &=& \bigl[d_0-a_0\ \ d_1-a_1\ \ \cdots\ \ d_{n-1}-a_{n-1}\bigr]^{\T}. \IEEEnonumber \label{eq:Lobserver-diff}
\end{IEEEeqnarray}
\end{The}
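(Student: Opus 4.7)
The plan is to work entirely in observer (companion) coordinates and exploit the fact that the output-injection correction is a rank-one update confined to a single column. First I would set $L_o := T^{-1}L$ and, using \eqref{eq:obs-sim}, compute
\[
T^{-1}(A - LC)T \;=\; T^{-1}AT - (T^{-1}L)(CT) \;=\; A_o - L_o C_o.
\]
Because $C_o = [\,0\ \cdots\ 0\ 1\,]$ has only real (central) entries, the outer product $L_o C_o \in \HHnn$ is the matrix whose only nonzero column is the last, and that column equals $L_o$. Subtracting it from $A_o$ therefore preserves the subdiagonal ones and all other structural zeros of $A_o$; only the rightmost column is altered. This structural observation is the heart of the argument.

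Next I would match coefficients. Writing $L_o = [\,\ell_1,\ \ldots,\ \ell_n\,]^{\T}$, the new rightmost column of $A_o - L_o C_o$ is $[-a_0-\ell_1,\ \ldots,\ -a_{n-1}-\ell_n]^{\T}$. For $A_o - L_o C_o$ to be the observable companion whose associated companion polynomial is the prescribed $a_d(\lambda)$, the identification $-a_k - \ell_{k+1} = -d_k$ for $k = 0,\ldots,n-1$ is \emph{necessary and sufficient}. This system has a unique solution, giving the claimed $L_o = [\,d_0-a_0,\ \ldots,\ d_{n-1}-a_{n-1}\,]^{\T}$, and by the back-map $L = T L_o$; invertibility of $T$ lifts uniqueness of $L_o$ to uniqueness of $L$.

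Then I would translate the coefficient identification into the spectral statement. By construction, $A_o - L_o C_o$ is in observable companion form with companion polynomial exactly $a_d$, so Theorem~\ref{Th:Ao-roots-spectrum} yields
\[
\sigma_\err(A_o - L_o C_o) \;=\; \bigcup_i [\lambda_i],
\]
where $\lambda_i$ are the right zeros of $a_d$, counted with multiplicities. Since right spectra are invariant under quaternionic similarity, $\sigma_\err(A - LC) = \sigma_\err(A_o - L_o C_o)$, which gives the claim for $A_{\mathrm{obs}} = A - LC$.

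I do not anticipate a deep obstacle; the argument is essentially bookkeeping, mirroring the classical real/complex derivation. The one noncommutativity-sensitive point is that the rank-one correction $L_o C_o$ involves real-scalar factors on the right (the entries of $C_o$), so the right-module convention does not interfere with coefficient matching; had $C_o$ contained nonreal quaternionic entries, the ordering of $L_o$ and $C_o$ would be genuinely delicate. The final step—transferring the spectrum back via similarity—is safe because right-eigenvalue \emph{classes} are invariant under conjugation $A \mapsto S^{-1}AS$, a property already recorded earlier in the paper.
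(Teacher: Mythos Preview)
Your proposal is correct and follows essentially the same route as the paper's proof: transform to observable companion coordinates, identify $L_oC_o$ as a rank-one correction to the last column only, match the resulting rightmost-column entries against the target coefficients $d_k$, invoke Theorem~\ref{Th:Ao-roots-spectrum} for the spectrum, and pull back via similarity. Your explicit remark that the real (central) entries of $C_o$ make the outer product $L_oC_o$ unambiguous is a nice noncommutativity check that the paper leaves implicit.
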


\begin{proof}
Work in observable coordinates. With \(L_o:=T^{-1}L\) and \eqref{eq:obs-sim},
\begin{IEEEeqnarray*}{rCl}
A_{\mathrm{obs},o}&:=&T^{-1}(A \!-\!LC)T  \!=\! A_o  \!-\! L_o C_o.
\end{IEEEeqnarray*}
Here $L_o C_o$ is the rank-one matrix whose last column equals $L_o$ and all other columns are zero. Writing \(L_o=\bigl[\ell_0,\ldots,\ell_{n-1}\bigr]^{\T}\), the rank–one update
subtracts \(L_o\) from the last (rightmost) column of \(A_o\). The rightmost column becomes $\bigl[-(a_0+\ell_0),\dots,-(a_{n-1}+\ell_{n-1})\bigr]^\T$
which yields the coefficient update.

A direct computation of the companion polynomial of \(A_{\mathrm{obs},o}\) shows that the closed–loop coefficients are updated as
\begin{IEEEeqnarray*}{rCl}
a_{\mathrm{obs}}(\lambda)&=&(a_0+\ell_0) + (a_1+\ell_1)\lambda 
 + \cdots \\ &&\negmedspace{}+ (a_{n-1}+\ell_{n-1})\lambda^{n-1} + \lambda^n.
\end{IEEEeqnarray*}
Choosing \(\ell_i=d_i-a_i\) makes \(a_{\mathrm{obs}}(\lambda)=a_d(\lambda)\).
By the observable–companion root/spectrum property, the right zeros of \(a_{\mathrm{obs}}\) are the right eigenvalues of \(A_{\mathrm{obs},o}\), hence of \(A_{\mathrm{obs}}\).
Uniqueness follows since the coefficient update \(\ell_i=d_i-a_i\) is bijective and \(L\leftrightarrow L_o\) via \(T\) is bijective. Finally \eqref{eq:L-backmap} holds by definition.
\end{proof}

\begin{Ex}[\emph{Example \ref{ExCompanion} continued}]\label{Ex0CompanionMethod} 
With the companion form \eqref{Eq0ExAoCo}, the similarity matrix \eqref{Eq0T}, and the companion polynomial \eqref{Eq0Excomppol} in hand,  we now consider several choices of desired polynomials:\\
\emph{(a) Real targets:}  $a_d(\lambda)=2+3\lambda+\lambda^2$ ($\lambda_1=-1,\lambda_2=-2$) gives
\begin{IEEEeqnarray*}{lCl}
    L_o&=&
    \begin{bmatrix}
    1.75+0.25\j \\
     2+0.5\j 
    \end{bmatrix}, \\
    L&=&
    \begin{bmatrix}
     1.25-2.25\j \\   
    -0.75\i+0.25\k
    \end{bmatrix},  \IEEEyesnumber \label{Eq0ExLreal}
\end{IEEEeqnarray*}
and the observer error dynamics matrix
\begin{IEEEeqnarray*}{rCl}
    A_{\rm obs} &=& 
    \begin{bmatrix}
    -2.75-\j    & 2.5\i-1.25\k \\
     0.5\i+0.75\k  &  -0.25-\j      
    \end{bmatrix}    \IEEEyesnumber \label{Eq0ExAobsreal}
\end{IEEEeqnarray*}
has right spectrum \(\sigma_\err(A_{\mathrm{obs}})=\{[-1],[-2]\}\), with both classes singleton.\\

\emph{(b) Complex pair:} 
 $a_d(\lambda)=2+2\lambda+\lambda^2$
 ($\lambda_{1,2}=-1\pm \i$) yields
\begin{IEEEeqnarray*}{lCl}
 L_o&=&
    \begin{bmatrix}
    1.75+0.25\j \\
     1+0.5\j   
    \end{bmatrix} ,    \\
 L&=&
    \begin{bmatrix}
     0.75-2.75\j   \\
    -0.25\i+1.75\k
    \end{bmatrix} , 
    \label{Eq0ExLcomplex} 
\end{IEEEeqnarray*}
and
\begin{IEEEeqnarray*}{rCl}\label{Eq0ExAobscomplex}
 A_{\rm obs} &=&
    \begin{bmatrix}
    -3.25-0.5\j  &  3\i-0.75\k  \\
     2\i+0.25\k  &    1.25-0.5\j
    \end{bmatrix} .   
\end{IEEEeqnarray*}
Here the assigned class is $\left[-1+\i\right]\in\sigma_\err(A_{\mathrm{obs}})$; equivalently, its standard representatives are $-1\pm\i$.
\\
\emph{(c) Quaternionic coefficients:} Choosing target similarity classes represented, for example, by $\lambda_1=-1+\j$ and $\lambda_2=-2+\k,$ sets (rounded)
\begin{IEEEeqnarray*}{rCl}
    a_d(\lambda)&=&
    2.7-\i-1.3\j+0.33\k \IEEEnonumber \\
    && + (3-0.67\i-0.33\j-0.33\k)\lambda + \lambda^2.
    \label{Eq0Exadlambdaquat}
  \end{IEEEeqnarray*}
Here \(a_d(\lambda)\in\mathbb{H}[\lambda]\) has nonreal (noncentral) coefficients, so a coordinate-free evaluation $a_d(A)$ is not similarity-equivariant, and we obtain (rounded)
\begin{IEEEeqnarray*}{rCl}
 L_o&=&
        \narrowbmatrix{ 
    2.42-\i-1.08\j+0.33\k \\
     2-0.67\i+0.17\j-0.33\k
     },
    \\
 L&=&  \narrowbmatrix{
 -1.25-1.167\i-3.75\j-1.83\k \\
    -1.5+1.42\i-1.17\j+1.75\k 
 },
\end{IEEEeqnarray*}
and (rounded)
\begin{IEEEeqnarray*}{l}
 A_{\rm obs} = \\
\quad \;
\tfrac12
    \narrowbmatrix{
         \!-\!4.2\!-\!1.8\i\!+\!1.5\j\!+\!1.2\k &
         \!-\!1.8\!+\!4\i\!-\!1.2\j\!+\!1.2\k  \\
    \!-\!1.2\!+\!2\i\!+\!1.5\j\!-\!1.4\k   &
    1.2\!+\!1.2\i\!+\!1.2\j\!+\!1.5\k
    }.
\end{IEEEeqnarray*}
Its right spectrum consists of the two similarity classes
\[
\sigma_\err(A_{\mathrm{obs}})=\{[-1+\i],[-2+\i]\}
=\{[-1+\j],[-2+\k]\}.
\]
Thus the observable-form coefficient update still assigns the desired similarity classes even when the coefficients are quaternionic.
\end{Ex}

\section{OBSERVER GAIN VIA ACKERMANN’S FORMULA}
Having established the quaternionic observer in its canonical (observable) form, we now turn to a compact algebraic method for directly computing the observer gain.
The resulting \emph{Observer Ackermann's formula} is the dual counterpart of the control-side expression. 
It provides a one–shot computation of \(L\) achieving any admissible placement of right–eigenvalue classes (observer poles).
Its structure mirrors the classical real/complex case. Yet, its justification over~$\HH$ requires special care: the argument avoids determinants and adjugates and instead exploits the observable companion relations and the right–polynomial calculus developed earlier.
This formulation establishes the quaternionic duality between controllability and observability in a determinant–free setting.
\begin{The}[\emph{Quaternionic Observer Ackermann’s formula}]
\label{ThAckermann-Obs}
Let \(A\in\HH^{n\times n}\), \(C\in\HH^{1\times n}\) be such that the SISO pair \((A,C)\) is observable and the observability matrix \(\mathcal{O}\) from \eqref{Eq0OBSmat} is invertible.
Let
\[ a_d(\lambda)=d_0+d_1\lambda+\cdots+d_{n-1}\lambda^{n-1}+\lambda^n\in\RR[\lambda]\]
be the desired monic error-dynamics polynomial with real coefficients, and let \(e_n=[0,0,\dots,1]^{\T}\in\RR^{n}\).
Then the observer gain column
\begin{IEEEeqnarray}{rCl}
L &=& a_d(A)\mathcal{O}^{-1} e_n,
\label{Eq0qacker0obs}
\end{IEEEeqnarray}
used in \eqref{eq:ct-fo} yields the error-dynamics matrix
\(A_{\mathrm{obs}}=A-LC\) whose right spectrum equals the similarity classes induced by the right roots
of \(a_d(\lambda)\):
\begin{IEEEeqnarray}{rCl}
\sigma_\err(A-LC) &=& \{[\lambda_i]: i=1,\ldots,n\},
\end{IEEEeqnarray}
where \(\{\lambda_i\}_{i=1}^n\) are right roots of \(a_d\) (listed with multiplicity).
\end{The}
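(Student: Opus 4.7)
The plan is to reduce the claim to the companion-form pole-placement result already proved. The decisive feature enabling a one-shot Ackermann expression over $\HH$ is that $a_d\in\RR[\lambda]$: real coefficients are central, so $a_d(\cdot)$ commutes with similarity; without centrality, the evaluation $p(A)$ is not similarity-equivariant, and the closed form fails (cf.\ Example~\ref{Ex0CompanionMethod}(c)).

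Concretely, let $T$ be the similarity matrix of Theorem~\ref{Th:DeterminantFreeCompanion}, so that $A_o=T^{-1}AT$, $C_o=CT=e_n^{\T}$, and the first column of $T$ equals $s:=\mathcal{O}^{-1}e_n$; i.e.\ $s=Te_1$. Since each $d_k\in\RR$ is central, one has the similarity identity $a_d(A)=T\,a_d(A_o)\,T^{-1}$, and therefore
\[
L \;=\; a_d(A)\,\mathcal{O}^{-1}e_n \;=\; a_d(A)\,s \;=\; T\,a_d(A_o)\,T^{-1}s \;=\; T\,a_d(A_o)\,e_1.
\]
Setting $L_o:=T^{-1}L=a_d(A_o)\,e_1$ recasts the formula entirely in companion coordinates.

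Next I would evaluate $a_d(A_o)\,e_1$ using the Krylov/shift structure of $A_o$ already exploited in Theorem~\ref{Th:ObsCompanionAnnihilates}. By the subdiagonal shift, $A_o^k e_1=e_{k+1}$ for $k=0,\ldots,n-1$, while reading off the rightmost column of $A_o$ gives $A_o^n e_1=-\sum_{k=0}^{n-1} a_k\,e_{k+1}$. Substituting,
\[
a_d(A_o)\,e_1 \;=\; \sum_{k=0}^{n-1} d_k\,e_{k+1} + A_o^n e_1 \;=\; \bigl[\,d_0-a_0,\;d_1-a_1,\;\ldots,\;d_{n-1}-a_{n-1}\,\bigr]^{\T}.
\]
This is precisely the gain $L_o$ prescribed by the preceding pole-placement theorem via observable companion form, so that theorem delivers $\sigma_\err(A-LC)=\medcup_{i=1}^{n}[\lambda_i]$, the multiset of similarity classes of right roots of $a_d$, which is the stated conclusion.

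The main obstacle, and the only nontrivial conceptual point, is the similarity-equivariance step $a_d(A)=T\,a_d(A_o)\,T^{-1}$: it rests squarely on the centrality of each $d_k$. For a general quaternionic coefficient, $d_k\,T\neq T\,d_k$, and neither the one-shot formula nor the resulting class assignment survives — this is exactly the failure illustrated in Example~\ref{Ex0CompanionMethod}(c). Everything else is routine: identifying $\mathcal{O}^{-1}e_n$ with $Te_1$ from the construction of $T$, and reading off $a_d(A_o)\,e_1$ via the Krylov shift and the last-column definition of the $a_k$ — both immediate consequences of Theorems~\ref{Th:DeterminantFreeCompanion} and \ref{Th:ObsCompanionAnnihilates}.
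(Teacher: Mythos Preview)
Your proposal is correct and follows essentially the same route as the paper's own proof: reduce to companion coordinates via $T$, use centrality of the $d_k$ to get $a_d(A)=T\,a_d(A_o)\,T^{-1}$, compute $a_d(A_o)e_1$ from the shift relations $A_o^{k}e_1=e_{k+1}$ and the rightmost-column read-off $A_o^{n}e_1=-[a_0,\ldots,a_{n-1}]^{\T}$, and then invoke the companion-form pole-placement theorem. Your identification $\mathcal{O}^{-1}e_n=Te_1$ directly from the construction $T=[\,s\ As\ \cdots\ A^{n-1}s\,]$ with $s=\mathcal{O}^{-1}e_n$ is in fact slightly more direct than the paper's derivation via $\mathcal{O}=\mathcal{O}_o T^{-1}$ and $\mathcal{O}_o^{-1}e_n=e_1$, but the two arguments are otherwise identical in substance.
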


\begin{Rem}
The restriction 
\[a_d\in\RR[\lambda]\] 
is essential: real coefficients lie in the center \(Z(\HH)=\RR\),
so \(a_d(A)\) is unambiguous and the required polynomial–similarity interchange is valid.
The assigned classes are real or complex-conjugate spherical:
for an \(n\)-th order SISO design, each real class consumes one degree and each nonreal class two,
so \(2s+r\le n\) when placing \(s\) nonreal and \(r\) real classes.
\end{Rem}

\begin{proof}
\emph{Observable companion coordinates.}
Choose \(T\) invertible with
\begin{IEEEeqnarray*}{rCl}
A_o &=& T^{-1}AT, \qquad C_o = CT = e_n^{\T}.
\end{IEEEeqnarray*}
Then \(A_o\) is in observable companion form with the rightmost column
\([ -a_0,\ldots,-a_{n-1}]^{\T}\) and companion polynomial
\(a(\lambda)=a_0+a_1\lambda+\cdots+a_{n-1}\lambda^{n-1}+\lambda^n\).
Let \(L_o:=T^{-1}L\). We have
\begin{IEEEeqnarray}{rCl}
T^{-1}(A-LC)T &=& A_o - L_o C_o = A_o - L_o e_n^{\T}.
\label{Eq:obs-closed-loop}
\end{IEEEeqnarray}

\emph{Coefficient matching in the right observable form.}
With \(u_k:=A_o^{k} e_1=e_{k+1}\) for \(k=0,\ldots,n-1\) and
\(A_o^{n} e_1 = -[\,a_0,\ldots,a_{n-1}\,]^{\T}\) (rightmost column read-off), we obtain
\begin{IEEEeqnarray}{rCl}
a_d(A_o)e_1
&=& \sum_{k=0}^{n-1} d_k A_o^{k} e_1 + A_o^{n} e_1 \IEEEnonumber\\
&=& \begin{bmatrix} d_0\\ \vdots\\ d_{n-1}\end{bmatrix}
  - \begin{bmatrix} a_0\\ \vdots\\ a_{n-1}\end{bmatrix}
= \begin{bmatrix} d_0-a_0\\ \vdots\\ d_{n-1}-a_{n-1}\end{bmatrix}.
\label{Eq:adAo-e1}
\end{IEEEeqnarray}
Setting \(L_o := a_d(A_o)e_1\) therefore adds \(L_o\) to the entries \(\{-a_k\}\) in the
\emph{rightmost column} of \(A_o\), producing the updated coefficients
\(\{- (a_k+\ell_k)\}\) and thus the closed-loop polynomial \(a_{\mathrm{obs}}(\lambda)=a_d(\lambda)\).
Since similarity preserves the right spectrum, the error matrix \(A-LC\) attains the desired
right-eigenvalue classes induced by \(a_d\).

\emph{Coordinate-free expression via \(\mathcal{O}\) and central polynomials.}
Let $\mathcal{O}_o$ be the observability matrix of $(A_o,C_o)$. Then
\begin{IEEEeqnarray*}{c}
\mathcal{O}
=
\begin{bmatrix}
C\\ CA\\ \vdots\\ CA^{n-1}
\end{bmatrix}
=
\begin{bmatrix}
C_o\\ C_o A_o\\ \vdots\\ C_o A_o^{n-1}
\end{bmatrix} T^{-1}
=
\mathcal{O}_o T^{-1},
\end{IEEEeqnarray*}
so $\mathcal{O}^{-1} = T\mathcal{O}_o^{-1}.$
Since $u_k:=A_o^k e_1$ satisfies $u_k=e_{k+1}$ for $k=0,\ldots,n-1$,
the first column of $\mathcal{O}_o$ is $e_n$, hence $\mathcal{O}_o e_1=e_n$ and
$\mathcal{O}_o^{-1} e_n = e_1$. Therefore
\[
\mathcal{O}^{-1} e_n = T e_1.
\]
Because $a_d\in\mathbb R[\lambda]$ and $A=T A_o T^{-1}$,
\[
a_d(A) = T\,a_d(A_o)\,T^{-1},
\]
and combining with $L=T L_o$ and $L_o=a_d(A_o)e_1$ gives
\[
L = T\,a_d(A_o)e_1 = a_d(A)\,T e_1 = a_d(A)\,\mathcal{O}^{-1} e_n,
\]
as claimed.

Similarity preserves the right spectrum, hence
\(\sigma_\err(A-LC)=\sigma_\err(A_o-L_oe_n^{\T})\),
and this spectrum consists exactly of the similarity classes
of the right zeros of \(a_d\).
\end{proof}

\begin{Ex}[\emph{Example \ref{Ex0CompanionMethod} continued}] \label{Ex0Acker}
Building on the previous example, recall $\mathcal{O}^{-1}$ from \eqref{Eq0OBSi}. We now consider \emph{real targets} only:  $a_d(\lambda)=2+3\lambda+\lambda^2$ ($\lambda_1=-1,\lambda_2=-2$) gives 
\begin{IEEEeqnarray*}{rCl}
    a_d(A)=
    \begin{bmatrix}
         0.625+0.5\j  &    0.5\i-0.125\k \\
     0.5\i+0.125\k &    0.625-0.5\j
    \end{bmatrix},
\end{IEEEeqnarray*} 
and the observer gain Ackermann formula \eqref{Eq0qacker0obs} yields
\begin{IEEEeqnarray*}{rCl}
    L&=&
    \begin{bmatrix}
     1.25-2.25\j   \\
    -0.75\i+0.25\k     
    \end{bmatrix}, \IEEEyesnumber \label{Eq0ExLreal2}
\end{IEEEeqnarray*}
which equals \eqref{Eq0ExLreal}. The resulting observer error dynamics matrix
$ A_{\rm obs} = A-LC$ is identical to \eqref{Eq0ExAobsreal} and 
its right spectrum is \(\sigma_\err(A_{\mathrm{obs}})=\{[-1],[-2]\}\), as desired, with both classes singleton.

For complex pair targets, the results coincide with Example \ref{Ex0CompanionMethod} (b).
By contrast, for quaternionic targets as in Example \ref{Ex0CompanionMethod} (c), the Ackermann formula returns an incorrect gain $L$ and misplaces the poles.
Coefficient update in companion coordinates can handle noncentral coefficients; the Ackermann shortcut cannot.
\end{Ex}

\section{SIMULATION EXAMPLE}
\begin{Ex}[\emph{based on previous examples}]
For \eqref{eq:ct-ss-siso} with \(A,C\) from \eqref{Eq0ACEx}, \(B=\begin{bmatrix}1&\!\j\end{bmatrix}^{\!\T}\), \(D=0\), input \(u(t)=1-\i+2\j-2\k\), and initial conditions \(x_0=\bigl[-1+\i-2\j+3\k,\;1+2\i-\j-2\k\bigr]^{\T}\), \(\hat x_0=\bigl[0,0\bigr]^{\T}\), the gain from \eqref{Eq0ExLreal}--\eqref{Eq0ExLreal2} yields convergence \(\hat x(t)\to x(t)\) in all quaternionic components; see Fig.~\ref{Fig01}.
\end{Ex}
\begin{figure}[!t]
    \centering    \includegraphics[width=\linewidth]{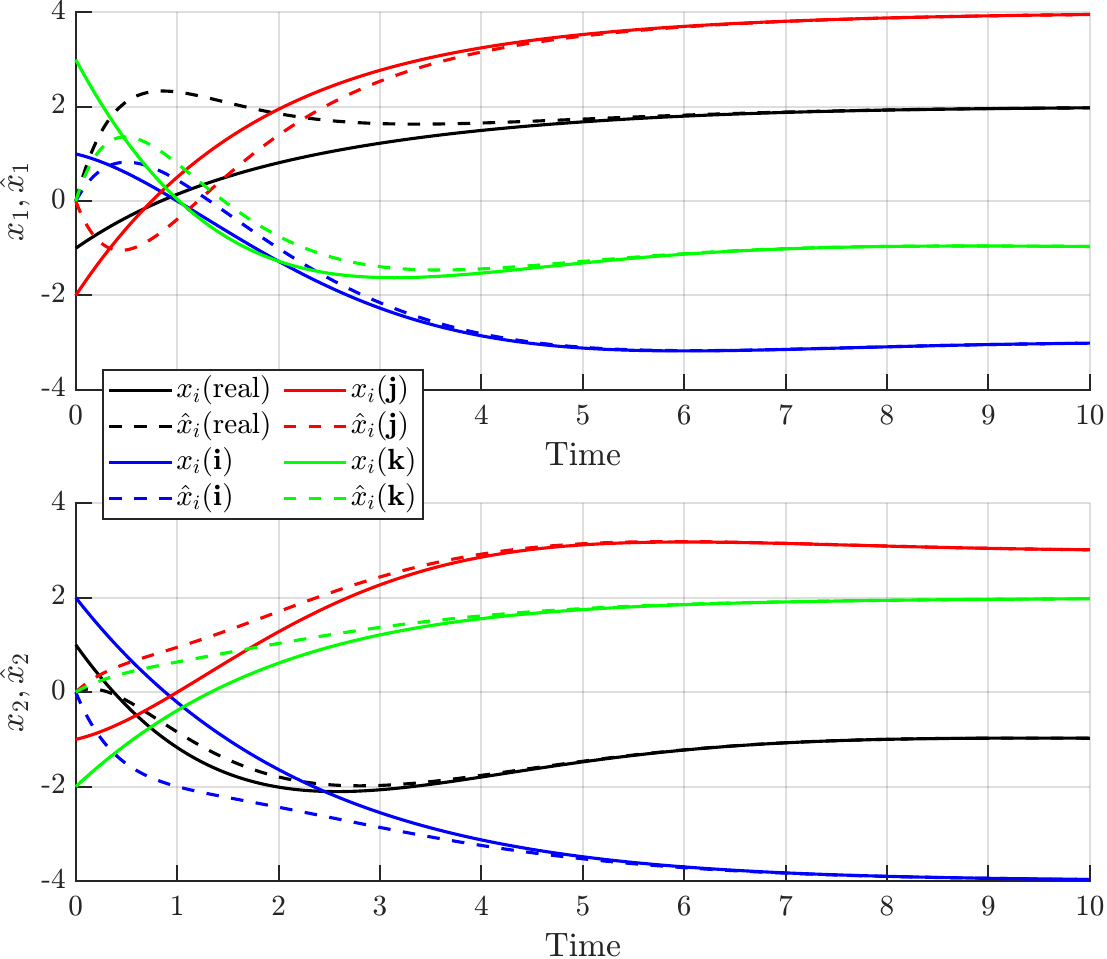}
    \caption{Observer states converge to system states in all quaternionic components.}
    \label{Fig01}
\end{figure}
%
\section{CONCLUSIONS}
We developed a characteristic-polynomial-free observer framework for quaternionic systems, centered on an observable companion realization and a quaternionic Ackermann formula for real-coefficient targets. The method assigns right-eigenvalue similarity classes directly in the error dynamics and makes the \(^*\)-based controller--observer duality explicit; companion-coordinate updates accommodate noncentral coefficients, whereas Ackermann does not. Application-oriented quaternionic observers for four-channel systems are discussed in~\cite{SebekEUROCAST2026}, while related controllability companion-form and state-feedback results are developed in~\cite{SebekTACsubmitted2025}.




\begin{thebibliography}{99}

\bibitem{AlpayColomboKimsey2016}
D.~Alpay, F.~Colombo, and D.~P. Kimsey, ``The spectral theorem for quaternionic unbounded normal operators based on the $S$-spectrum,'' \emph{J. Math. Phys.}, vol.~57, no.~2, p.~023503, 2016.
\DOI{10.1063/1.4940051}

\bibitem{AntsaklisMichel06}
P.~J. Antsaklis and A.~N. Michel, \emph{Linear Systems}, 2nd~ed. Boston, MA, USA: Birkhauser, 2006.

\bibitem{ChapmanMachen16}
A.~Chapman and C.~Machen, ``Standard polynomial equations over division algebras,'' \emph{Adv. Appl. Clifford Algebras}, vol.~27, pp.~1065--1072, 2017.
\DOI{10.1007/s00006-016-0740-4}

\bibitem{ChenFecKanWang2022}
D.~Chen, M.~Feckan, and J.~Wang, ``Linear quaternion-valued difference equations: representation of solutions, controllability, and observability,'' \emph{J. Math. Phys.}, vol.~63, no.~11, p.~112701, 2022.
\DOI{10.1063/5.0100608}

\bibitem{daSilvaRouchon2008}
P.~S.~P. da~Silva and P.~Rouchon, ``Flatness-based control of a single qubit gate,'' \emph{IEEE Trans. Autom. Control}, vol.~53, no.~3, pp.~775--779, Apr. 2008.
\DOI{10.1109/TAC.2008.917650}

\bibitem{Emsley1992}
L.~Emsley and G.~Bodenhausen, ``Optimization of shaped selective pulses for NMR using a quaternion description of their overall propagators,'' \emph{J. Magn. Reson.}, vol.~97, no.~1, pp.~135--148, 1992.
\DOI{10.1016/0022-2364(92)90242-Y}

\bibitem{FarenickPidkowich2003}
D.~R. Farenick and B.~A.~F. Pidkowich, ``The spectral theorem in quaternions,'' \emph{Linear Algebra Appl.}, vol.~371, pp.~75--102, 2003.
\DOI{10.1016/S0024-3795(03)00420-8}

\bibitem{GordonMotzkin65}
B.~Gordon and T.~S. Motzkin, ``On the zeros of polynomials over division rings,'' \emph{Trans. Amer. Math. Soc.}, vol.~116, pp.~218--226, 1965.
\DOI{10.1090/S0002-9947-1965-0195853-2}

\bibitem{Jiang2020}
B.~X. Jiang \emph{et al.}, ``Controllability and observability of linear quaternion-valued systems,'' \emph{Acta Math. Sin. (Engl. Ser.)}, vol.~36, pp.~1299--1314, 2020.
\DOI{10.1007/s10114-020-8167-1}

\bibitem{SPMag2023}
S.~Miron \emph{et al.}, ``Quaternions in signal and image processing: a comprehensive and objective overview,'' \emph{IEEE Signal Process. Mag.}, vol.~40, no.~6, pp.~26--40, Sept. 2023.
\DOI{10.1109/MSP.2023.3278071}

\bibitem{Luenberger1966}
D.~G. Luenberger, ``Observers for multivariable systems,'' \emph{IEEE Trans. Autom. Control}, vol.~11, no.~2, pp.~190--197, 1966.
\DOI{10.1109/TAC.1966.1098323}

\bibitem{MaLi25}
G.~Ma and H.~Li, ``Quantum fully homomorphic encryption by integrating Pauli one-time pad with quaternions,'' \emph{Quantum}, vol.~6, p.~866, 2022.
\DOI{10.22331/q-2022-12-01-866}

\bibitem{MarkleyCrassidis2014}
F.~L. Markley and J.~L. Crassidis, \emph{Fundamentals of Spacecraft Attitude Determination and Control}. New York, NY, USA: Springer, 2014.
\DOI{10.1007/978-1-4939-0802-8}

\bibitem{Meglicki2008}
Z.~Meglicki, ``Quaternions,'' in \emph{Quantum Computing Without Magic: Devices}. Cambridge, MA, USA: MIT Press, 2008.

\bibitem{Parcollet2018}
T.~Parcollet \emph{et al.}, ``Quaternion convolutional neural networks for end-to-end automatic speech recognition,'' in \emph{Proc. Interspeech}, pp.~22--26, 2018.
\DOI{10.48550/arXiv.1806.07789}

\bibitem{PereiraVettori2006}
R.~Pereira and P.~Vettori, ``Stability of quaternionic linear systems,'' \emph{IEEE Trans. Autom. Control}, vol.~51, no.~3, pp.~518--523, Mar. 2006.
\DOI{10.1109/TAC.2005.864202}

\bibitem{PogoruiShapiro2004}
A.~Pogorui and M.~Shapiro, ``On the structure of the set of zeros of quaternionic polynomials,'' \emph{Complex Var. Elliptic Equ.}, vol.~49, no.~6, pp.~379--389, 2004.
\DOI{10.1080/0278107042000220276}

\bibitem{Rodman2014}
L.~Rodman, \emph{Topics in Quaternion Linear Algebra}. Princeton, NJ, USA: Princeton Univ. Press, 2014.

\bibitem{SebekTACsubmitted2025}
M.~Sebek, ``Quaternionic pole placement via companion forms and the Ackermann formula,'' submitted to \emph{IEEE Trans. Autom. Control}, 2025.

\bibitem{SebekEUROCAST2026}
M.~Sebek, ``Quaternionic observers for multi-channel systems: from smart grids to medical imaging,'' presented at \emph{EUROCAST 2026}, Las Palmas de Gran Canaria, Spain, 2026.

\bibitem{Takahashi2017}
K.~Takahashi, A.~Isaka, T.~Fudaba, and M.~Hashimoto, ``Remarks on quaternion neural network-based controller trained by feedback error learning,'' in \emph{Proc. 2017 IEEE/SICE Int. Symp. Syst. Integr. (SII)}, pp.~875--880, 2017.
\DOI{10.1109/SII.2017.8279333}

\bibitem{TalebiWernerMandic20}
S.~P. Talebi, S.~Werner, and D.~P. Mandic, ``Quaternion-valued distributed filtering and control,'' \emph{IEEE Trans. Autom. Control}, vol.~65, no.~10, pp.~4246--4257, Oct. 2020.
\DOI{10.1109/TAC.2020.3007332}

\bibitem{Zhang1997}
F.~Zhang, ``Quaternions and matrices of quaternions,'' \emph{Linear Algebra Appl.}, vol.~251, pp.~21--57, 1997.
\DOI{10.1016/0024-3795(95)00543-9}
%
\end{thebibliography}
\end{document}